\newtheoremstyle{theorem}%name
  {10pt}		  % space above
  {10pt}  % space below
  {\sl}  % bofy font
  {\parindent}     % ident - empty=no indent,  \parindent= paragraph indent
  {\bf}  % thm head font
  {. }    % punctuation after thm head
  { }    % space after thm head: `` ``=normal \newline=linebreak
  {}     % thm head specification
\theoremstyle{theorem}
\newtheorem{theorem}{Theorem}
\newtheoremstyle{defi}%name
  {10pt}		  % space above
  {10pt}  % space below
  {\rm}  % bofy font
  {\parindent}     % ident - empty=no indent,  \parindent= paragraph indent
  {\bf}  % thm head font
  {. }    % punctuation after thm head
  { }    % space after thm head: `` ``=normal \newline=linebreak
  {}     % thm head specification
\theoremstyle{defi}
\begin{document}

\title{Language properties and Grammar of Parallel and Series Parallel Languages}

\author{Mohana.N$^1$, Kalyani Desikan$^2$ and V.Rajkumar Dare$^3$\\
$^{1}$ Division of Mathematics, School of Advanced Sciences,\\
VIT University, Chennai, India\\
mohana.n@vit.ac.in\\[2pt]
$^2$Division of Mathematics, School of Advanced Sciences,\\
VIT University, Chennai, India\\
kalyanidesikan@vit.ac.in\\[2pt]
$^3$Department of Mathematics, Madras Christian College,\\
Chennai, India\\
rajkumardare@yahoo.com}
\date{}
\maketitle

\begin{abstract}
In this paper we have defined the language theoretical properties of Parallel languages and series parallel languages. Parallel languages and Series parallel languages play vital roles in parallel processing and many applications in computer programming. We have defined regular expressions and context free grammar for parallel and series parallel languages based on sequential languages [2]. We have also discussed the recognizability of parallel and series parallel languages using regular expression and regular grammar.
\end{abstract}

{\bf Keywords: Formal language theory, Series parallel languages, Branching automaton, context free language, regular grammar}
%\linenumbers
\section{Introduction}
A language is a medium of communication. In communicating a problem to a machine, the design of a proper language of computation is important and this is the fundamental objective of computability. From the perspective of theory of computation, any type of problem can be expressed in terms of language recognition.\\ Fundamentally, a computer is a symbol manipulator. It takes sequences of symbols as input and processes them as per the program specifications. Finite sequence of symbols over an alphabet is called a string. In other words, alphabets are sequentially arranged. If the alphabets are arranged in parallel, then we call it as parallel words or strings. Series parallel words are arranged both sequentially and in parallel. In this paper we have given the basic notations and definitions of parallel and series parallel words. In section 3, we have discussed about operations such as concatenation, parallel operation, Kleene closure and String Reversal on parallel and series parallel languages. We have also defined context free grammar on parallel and series parallel languages in section 4. Regular expressions and regular languages have been defined in section 5 and 6. In section 7, we have given parallel regular grammar and series parallel regular grammar and their properties.
\section {Preliminaries}
Let $\Sigma$ be an alphabet. Let $\Sigma^{*}$ denote the set of all finite sequential terms and $\Sigma^{\oplus}$ the set of all finite parallel terms over $\Sigma$. In general, $\Sigma_{p}^{+}=\{x\|y\colon x,y\in\Sigma^{+}\}$, $\Sigma_{s}^{\oplus}=\{x.y\colon x,y\in\Sigma^{\oplus}\}$.
$SP$($\Sigma$) is the set of all finite series parallel words over $\Sigma$. In other words $SP$($\Sigma$)=$\Sigma^{+}$$\cup$$\Sigma^{\oplus}$$\cup$$\Sigma_{p}^{+}$$\cup$$\Sigma_{s}^{\oplus}$.\\ Language is a set of words or strings. If $L\subseteq \Sigma^{*}$, $L\subseteq \Sigma^{\oplus}$ and $L\subseteq SP^(\Sigma)$ then L is said to be a sequential, parallel and series parallel languages over $\Sigma$, respectively.\\ Length of a word from $\Sigma^{*}$ can be defined as the number of alphabets in the word and the length of a word from $\Sigma^{\oplus}$ is always one. Similarly, depth of a word from $\Sigma^{\oplus}$ is the number of alphabets in the word and the depth of the word from $\Sigma^{*}$ is one. In general, length and depth of a word in a series parallel language can be defined as follows:\\ $lg(x.y)=lg(x)+lg(y)$, $lg(x\|y)=max(lg(x),lg(y))$\\$dp(x.y)=max(dp(x),dp(y))$, $dp(x\|y)=dp(x)+dp(y)$\\ where $lg$ and $dp$ represent the length and depth of a word.
\section{Operations on languages}
We have some basic operations such as concatenation, parallel operation, Kleene Closure and String Reversal on languages over $\Sigma$. Here we discuss these operations for parallel languages and series parallel languages. The above operations on sequential languages have already been discussed in [2].
\subsection{Parallel languages}
Let $L1,L2\subseteq \Sigma^{\oplus}$.\\$\mathbf{Concatenation}$ of two parallel languages $L1$ and $L2$ is defined as $L=L1\cdot L2$. That is $L=\{x\cdot y|x\in L1, y\in L2\}$.\\ $\mathbf{Parallel}$ $\mathbf{Operation}$ on two parallel languages $L1$ and $L2$ is given by $L=L1\| L2$. That is $L=\{x\| y|x\in L1, y\in L2\}$.\\
Let $L$ be a set of strings from $\Sigma^{\oplus}$. $\mathbf{Kleene}$ $\mathbf{Closure}$ of $L$ is defined as the set of strings formed by performing the parallel operation on strings from $L$ with repetitions. More generally, $L^{\oplus}$ is an infinite union $\cup L_{n}$, for $n\geq 0$ where $L_{n}$ represents $n$ number of repetitions applied parallely on  strings of $L$.\\ For instance, if $L=\{a,a\|b\}$ the $L_{0}=\epsilon$, $L_{1}=\{a,a\|b\}$, $L_{2}=\{a\|a,a\|a\|b,a\|b\|a,\\a\|b\|a\|b\}$ and so on.
%sequential languages as $S^{*}=\cup S^{i}$, for $i\geq0$, where $S^{i}$ denotes a set of strings of length $i$. For parallel languages $S^{\oplus}=\cup S_{i}$,for $i\geq0$, where $S_{i}$ denotes a set of strings of depth $i$.\\
\\Let $x$ be a string in $\Sigma^{\oplus}$. Then $\mathbf{String}$ $\mathbf{reversal}$ (read backwards) of $x$ is denoted as $x^{R}$ which satisfies the conditions $x^{R}=x$ with $dp(x^{R})=dp(x)$ and $lg(x^{R})=lg(x)$.
\subsection{Series parallel languages}
Let $L1,L2\subseteq SP(\Sigma)$.\\$\mathbf{Concatenation}$ of two series parallel languages $L1$ and $L2$ is defined as $L=L1\cdot L2$. That is $L=\{x\cdot y|x\in L1, y\in L2\}$.\\ $\mathbf{Parallel}$ $\mathbf{Operation}$ on two series parallel languages $L1$ and $L2$ is given by $L=L1\| L2$. \\That is $L=\{x\| y|x\in L1, y\in L2\}$.\\
Let $L$ be a set of strings from $SP(\Sigma)$. $\mathbf{Kleene}$ $\mathbf{Closure}$ of $L$ is defined as $L^{\otimes}=L^{*}\cup L^{\oplus}$, where $L^{*}=\cup L^{n}$ for $n\geq0$, $L^{n}$ represents $n$ number of repetitions and concatenation of strings of $L$ and $L^{\otimes}$ indicates the set of series parallel strings.\\ For example, let $L=\{ab,a\|b\}$ then $L^{0}=L_{0}=\epsilon$, $L^{1}=L_{1}=L$,\\ $L^{2}=\{abab,ab(a\|b),(a\|b)ab,(a\|b)(a\|b)\}$, $L_{2}=\{(ab)\|(ab),(ab)\|(a\|b),(a\|b)\|(ab),\\(a\|b)\|(a\|b)\}$ and so on.
\\Let $x$ be a string in $SP(\Sigma)$. Then $\mathbf{String}$ $\mathbf{reversal}$ of $x$ is defined as $x^{R}$ which satisfies the conditions $dp(x^{R})=dp(x)$ and $lg(x^{R})=lg(x)$.
\section{Context Free Grammar}
We use grammar to generate words of the language and it is represented by the set variables and terminals. The rules relating the variables are called productions.\\ In [2], we have a context free grammar, $G$ for sequential language is defined as $G=(V,T,P,S)$, where $V$ is a finite set of variables or non-terminals, $T$ is a finite set of terminals, $P$ is a finite set of production rules and $S$ is a start symbol, $S\in V$.\\ Each production is of the form $A\rightarrow \alpha$, where $A\in V$, $\alpha \in (V\cup T)^{*}$.\\A language generated by a context free grammar $G$ is called context free language. That is, $L(G)=\{w|w\in T^{*}, S\Rightarrow w\}$
\subsection{Context free Parallel languages}
Let $G=(V,T,P,S)$ a context free grammar. If $P$ has productions of the form $A\rightarrow \alpha$, $A\in V$, $\alpha \in (V\cup T)^{\oplus}$ then $G$ is said to be a context free parallel grammar.\\
A language generated by a context free parallel grammar $G$ with the production rules as described above is called context free parallel language. That is, $L(G)=\{w|w\in T^{\oplus}, S\Rightarrow w\}$\\
\emph{Example:1} Consider the grammar $G=\{\{S\},\{a,b\},P,S\}$ with production rules\\
\begin{center}
$S\rightarrow a\|b\|S$\\ $S\rightarrow a\|b|\epsilon$
\end{center}
This generates the language $L(G)=\{(a\|b)^{n^{\oplus}},n>0\}$ where $n^{\oplus}$ represents parallel iterations.
\subsection{Context free series parallel languages}
Let $G=(V,T,P,S)$ a context free grammar. If $P$ has productions $A\rightarrow \alpha$, $A\in V$, $\alpha \in SP(V\cup T)$ then G is called as context free series parallel grammar.\\ A language generated by a context free series parallel grammar $G$ with the production rules as described above is called context free series parallel language.
\emph{Example:2} Consider the grammar $G=\{\{A,B\},\{a,b\},P,S\}$ with production rules\\
\begin{center}
$S\rightarrow aA\|bB$,\\ $A\rightarrow Aa|\epsilon$\\ $B\rightarrow bB|\epsilon$
\end{center}
The above grammar generates the language $L(G)=\{a^{m}\|b^{n}|m,n\geq 1\}$
\section {Regular Expressions}
Regular expression is an another way of defining a language. Regular expressions have been defined by algebraic laws of arithmetic in [2] for sequential words over $\Sigma$. Now we define regular expressions for parallel and series parallel words as follows:
\subsection{Parallel Regular Expressions}
Let $\Sigma$ be an alphabet.
\begin{itemize}
\item $\phi$ is a regular expression
\item $a\in \Sigma$ is a regular expression
\item If $R$ is a regular expression then $R^{\oplus}$ is a regular expression
\item If $R1$ and $R2$ are regular expressions then $R1.R2$,$R1\cup R2$ and $R1\|R2$ are also regular expressions.
\end{itemize}
\subsection{Series Parallel Regular Expressions}
Regular expressions for series parallel strings can be defined using the same conditions as that of parallel strings, but instead of $R^{\oplus}$, we have $R^{\otimes}$ to describe Kleene closure on $SP(\Sigma)$.
\section{Regular Languages}
We have defined parallel regular languages and series parallel regular languages based on the definition of sequential regular languages in [1] as follows:
\begin{itemize}
\item $\phi$, $\{a\}$, $\bigcup L_{n}$, $L_{1}\cdot L_{2}\ldots L_{n}$, $L_{1}\|L_{2}\|...\|L_{n}$ and $L^{\oplus}$ are regular languages, where $L\subseteq \Sigma^{\oplus}$
\item $\phi$, $\{a\}$, $\bigcup L_{n}$, $L_{1}\cdot L_{2}\ldots L_{n}$, $L_{1}\|L_{2}\|...\|L_{n}$, $L^{*}$ and $L^{\oplus}$ are regular languages, where $L\subseteq SP(\Sigma)$
\end{itemize}
We can also define a regular language as a language recognized by an automaton and a language generated by a regular grammar as proved in [2] for sequential languages.\\
Recognizability of parallel and series parallel languages on Branching automaton has been discussed in [1] and [6]. Now we define regular grammar for parallel and series parallel languages and prove their regularity.
\section{Regular Grammar}
A grammar $G$ is said to be regular if $G$ is either right-linear or left-linear.\\ Consider $G=(V,T,P,S)$ the production rule $A\rightarrow xB$ or $B\rightarrow x$ is called right-linear and the production rule $A\rightarrow Bx$ or $B\rightarrow x$ is called left-linear, where $A,B\in V$ , $x\in T^{*}$.
\subsection{Parallel Regular Grammar}
A grammar $G$ on parallel languages is said to be regular [2] if $G$ is linear.\\ Consider $G=(V,T,P,S)$, the production rule $A\rightarrow x\|B$ or  $A\rightarrow B\|x$ or $B\rightarrow x$ is called linear, where $A,B\in V$ , $x\in T^{\oplus}$.\\
\begin{theorem}
Let $G=\{V,T,P,S\}$ be a linear grammar. Then $L(G)$ is a parallel regular language.
\end{theorem}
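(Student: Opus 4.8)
The plan is to show that the parallel language generated by the linear grammar $G$ belongs to the class listed in Section~6, by converting the productions of $G$ into a system of parallel language equations and solving it. First I would attach to every nonterminal $A\in V$ the parallel language $L_A=\{w\in T^{\oplus}:A\Rightarrow^{*}w\}$, so that $L(G)=L_S$. Reading off the productions, each $L_A$ satisfies an equation of the shape $L_A=\bigcup_i\bigl(\{x_i\}\,\|\,L_{B_i}\bigr)\cup\bigcup_j\{y_j\}$, where the first union collects the linear productions $A\to x_i\|B_i$ (together with the reversed form $A\to B_i\|x_i$) and the second collects the terminating productions $A\to y_j$, with all $x_i,y_j\in T^{\oplus}$.

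For the base of the argument, a terminating production contributes a singleton $\{y_j\}$; since any finite parallel term is built from the singletons $\{a\}$, $a\in T$, by finitely many applications of $\|$, every such $\{y_j\}$ is parallel regular by the defining list. The combination step then uses only closure of the parallel-regular class under $\cup$ and under parallel composition with a fixed singleton, both of which are granted directly by the entries $\bigcup L_n$ and $L_1\|L_2\|\cdots\|L_n$ of that list. The one genuinely new feature is self-reference: a production such as $A\to x\|A$ forces the derivable words to accumulate arbitrarily many parallel copies of $x$, and its contribution is precisely the parallel Kleene closure $\{x\}^{\oplus}$ composed in parallel with the non-recursive remainder.

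The main obstacle is to establish the parallel analogue of Arden's lemma: that an equation $L_A=(M\,\|\,L_A)\cup N$ has $M^{\oplus}\,\|\,N$ as its least solution and that this solution is exactly the language derived from $A$. I would prove this by a double inclusion, showing $M^{\oplus}\|N\subseteq L_A$ by induction on the number of parallel repetitions (pairing each repetition with one use of the recursive production) and $L_A\subseteq M^{\oplus}\|N$ by induction on the length of the derivation $A\Rightarrow^{*}w$. The delicate point, which I expect to be the crux of the whole theorem, is that the two orientations $A\to x\|B$ and $A\to B\|x$ must be treated simultaneously: I would lean on the associativity of $\|$ to regroup the parallel factors so that all recursive contributions coalesce into a single $(\cdot)^{\oplus}$, and keeping the parallel placement of the recursive and non-recursive parts consistent across both orientations is exactly the bookkeeping on which the reduction to one closure stands or falls.

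Finally, since $V$ is finite, I would eliminate the nonterminals one at a time, in the manner of state elimination for automata, substituting at each stage the closed form just obtained and re-applying the parallel Arden's lemma, until $L_S$ is expressed purely in terms of $\phi$, singletons $\{a\}$, $\cup$, $\|$ and $(\cdot)^{\oplus}$. By the definition of Section~6 such an expression denotes a parallel regular language, whence $L(G)=L_S$ is parallel regular, as required. As a cross-check, the same conclusion should be reachable from the recognizability of these languages on branching automata noted in [1] and [6], which would furnish an automaton-based alternative to the algebraic route sketched here.
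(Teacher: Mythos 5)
Your route is genuinely different from the paper's. The paper's own proof is a two-line appeal to definitions: it observes that every generated word lies in $T^{\oplus}$ and then cites the definition of parallel regular languages, constructing nothing. You, by contrast, actually build a representation of $L(G)$ from the operations of Section~6, via language equations, a parallel Arden's lemma, and elimination of nonterminals --- the standard rigorous argument of [2] adapted to $\|$. If completed, your argument establishes what the paper's text merely asserts.

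However, there is a genuine gap at exactly the point you flag as the crux: associativity of $\|$ is not enough to make the elimination go through. The problem is not the two orientations inside a single equation --- associativity does handle $L_A=(M\|L_A)\cup(L_A\|M')\cup N$, whose least solution is $M^{\oplus}\|N\|M'^{\oplus}$, still regular --- but what happens under substitution in a mutually recursive system with mixed orientations. Take the grammar $S\to a\|T$, $T\to S\|b\mid b$, which is linear under the paper's definition (the first production has the form $A\to x\|B$, the second the form $A\to B\|x$). Substituting the equation for $L_T$ into that for $L_S$ gives $L_S=(\{a\}\|L_S\|\{b\})\cup\{a\|b\}$: the recursion variable is now sandwiched between terminal blocks, no regrouping by associativity alone brings this to the form $(M\|L_S)\cup N$, and the least solution is $\{a^{n^{\oplus}}\|b^{n^{\oplus}}\mid n\geq 1\}$, the parallel analogue of $\{a^{n}b^{n}\}$, which is not obtainable from $\phi$, singletons, $\cup$, $\cdot$, $\|$ and $(\cdot)^{\oplus}$ if $\|$ is read as a non-commutative operation. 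What rescues both your proof and the theorem is commutativity of parallel composition, which is standard for series-parallel pomsets in the framework of [1] on which the paper relies: with $x\|y=y\|x$ the sandwich collapses, $\{a\}\|L_S\|\{b\}=\{a\|b\}\|L_S$, your Arden's lemma applies, and the solution $\{a\|b\}^{\oplus}\|\{a\|b\}=\{(a\|b)^{n^{\oplus}}\mid n\geq 1\}$ is regular. So you must replace the appeal to associativity by an explicit appeal to commutativity, stated as a property of $\Sigma^{\oplus}$; without it the elimination step --- and indeed the theorem as stated --- fails.
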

\begin{proof}
By the definition of linear grammar G the production rules are $A\rightarrow x\|B | B\|x$, $B\rightarrow x$ where $A,B\in V$ , $x\in T^{\oplus}$.\\ Now the language generated by the linear grammar is $L(G)=\{w|w\in T^{\oplus}\}$. Then by the definition of parallel regular languages, $L(G)$ is a parallel regular language.
\end{proof}
$\emph{Example:}$ Let $G=(V,T,P,S)$ where $V=\{A,B,S\}$, $T=\{b\}$ and \\$P=\{S\rightarrow a\|B, B\rightarrow b\|B, B\rightarrow b\}$ is a parallel regular grammar and the corresponding language $L(G)=\{a\|b^{n^{\oplus}}|n\geq 1\}$ is a parallel regular language.\\
\begin{theorem}
$L(G)$ is a parallel regular language if and only if $L$ is recognized by a branching automaton $\mathbf{A}$.
\end{theorem}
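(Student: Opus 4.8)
The plan is to prove the two implications separately, mirroring the classical equivalence between regular grammars and finite automata, but now at the level of parallel terms and branching automata (whose recognizability was established in [1] and [6]). Throughout I would use the characterisation of parallel regular languages from Section 6 together with Theorem 1, so that the phrase ``$L(G)$ is a parallel regular language'' is interchangeable with ``$L(G)$ is generated by a parallel (linear) grammar''. The real content is therefore to translate between a linear grammar on $T^{\oplus}$ and a branching automaton $\mathbf{A}$, in both directions.

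For the forward direction, I would assume $L = L(G)$ for a parallel regular grammar $G = (V,T,P,S)$ and construct a branching automaton $\mathbf{A}$ whose state set is $V$ together with one additional accepting state $q_f$, whose initial state is $S$, and whose transitions are read off the productions. A production $A\rightarrow x\|B$ (or $A\rightarrow B\|x$) becomes a fork/join transition that reads the parallel term $x\in T^{\oplus}$ while passing control to state $B$, and a terminating production $B\rightarrow x$ becomes a transition from $B$ to $q_f$ labelled $x$. The parallel operator $\|$ in the production is realised by the fork-and-join capability of the branching automaton, so that each derivation step of $G$ corresponds to exactly one transition of $\mathbf{A}$. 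An induction on the length of the derivation would then show that $w\in L(G)$ if and only if $\mathbf{A}$ has an accepting run on $w$, i.e.\ $\mathbf{A}$ recognises $L$.

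For the converse, I would reverse the construction: take the states of the branching automaton $\mathbf{A}$ as the variable set $V$, the initial state as $S$, and convert each transition into a linear production — a transition from $q$ to $q'$ reading a parallel term $x$ yields $q\rightarrow x\|q'$, and a transition from $q$ into an accepting state reading $x$ yields $q\rightarrow x$. The resulting grammar is parallel regular (linear) by construction, and the same derivation/run correspondence gives $L(G)=L$; by Theorem 1, $L=L(G)$ is then a parallel regular language.

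The main obstacle will be making the fork/join semantics of the branching automaton match the parallel productions of the linear grammar \emph{exactly}. A linear grammar introduces the parallel operator $\|$ one production at a time, whereas a branching automaton may fork into several parallel sub-computations simultaneously and later join them; I would need to verify that the restricted linear form $A\rightarrow x\|B \mid B\|x$, $B\rightarrow x$ is expressive enough to capture every parallel iteration accepted by $\mathbf{A}$, and that the depth and length bookkeeping ($dp(x\|y)=dp(x)+dp(y)$, $lg(x\|y)=\max(lg(x),lg(y))$) is preserved under the translation in both directions. Care is also needed to pin down the formal model of branching automaton from [1] and [6] — its fork, join and sequential transitions — so that the two constructions are genuinely inverse to one another and no accepting behaviour is lost or introduced.
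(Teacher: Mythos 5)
Your construction is, in spirit, the same one the paper uses --- grammar variables become automaton states, linear productions $A\rightarrow x\|B$ become fork/join transitions, and a derivation/run correspondence closes the argument --- so on the forward direction you and the paper essentially coincide (the paper reuses a state $V_{f}$ as the final state where you add a fresh $q_{f}$; that is cosmetic). The genuine difference is in the converse. The paper never constructs a grammar from an arbitrary branching automaton: it only takes an accepting run of the \emph{particular} automaton it has just built from $G$ and reads the derivation back off that run, so its ``if'' direction really proves only that this constructed $\mathbf{A}$ accepts no words outside $L(G)$. You instead treat the converse as a genuine construction from an arbitrary automaton --- states become variables, transitions become linear productions $q\rightarrow x\|q'$ or $q\rightarrow x$ --- and then invoke Theorem 1, which is a more faithful decomposition of the biconditional. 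The price is exactly the obstacle you flag at the end: a general branching automaton in the sense of [1] also has sequential transitions and nested fork/join structure, so the language it recognizes can be properly series-parallel rather than parallel, and then no linear parallel grammar generates it. Your converse is therefore only sound once $\mathbf{A}$ is restricted to the single-block fork/join form depicted in the paper's Figure 1; the paper sidesteps this by never quantifying over arbitrary automata (leaving its stated ``if'' direction unproved in full generality), whereas your plan confronts it but would need that restriction made explicit for the argument to go through.
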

\begin{proof}
Let $G=(V,T,P,S)$ be a linear grammar. Then by theorem 1, $L(G)$ is a parallel regular language. We have to prove that $L$ is recognized by a branching automaton $\mathbf{A}$.\\ In otherwords, let $w\in L(G)$ we now to show that $w$ is accepted by a Branching automaton $\mathbf{A}$.\\
Assume that $V=\{V_{0},V_{1},...V_{n}\}$ and $S=V_{0}$. Productions are of the form
$V_{0}\rightarrow a_{1}\|V_{1}$, $V_{1}\rightarrow a_{2}\|V_{2}$,...,$V_{n}\rightarrow a_{l}$, $a_{1},a_{2},...,a_{l}\in \Sigma$.
Let $w\in L(G)$, then the production rules are $V_{0}\Rightarrow a_{1}\|V_{1} \Rightarrow a_{1}\|a_{2}\|V_{2} \Rightarrow ...\Rightarrow a_{1}\|a_{2}\|...\|a_{k}\|V_{n} \Rightarrow a_{1}\|a_{2}\|...\|a_{k}\|a_{l}=w$.\\
A branching automaton [1][6] over the alphabet $\Sigma$ is $\mathbf{A}$=($Q$,$T_{seq}$,$T_{fork}$,\\$T_{join}$,$T_{par}$,$S$,$E$) where $Q$ is the set of finite states. $S$ and $E$ are subsets of $Q$, the set of initial (start) and final (end) states, respectively. $T_{seq}$$\subseteq$$Q$$\times$$\Sigma$$\times$$Q$ is the set of sequential transitions. $T_{par}$$\subseteq$$T_{fork}$$\times$B$\times$$T_{join}$, $B\subseteq \Sigma^{\oplus}$ is the set of parallel transitions where $T_{fork}$$\subseteq$$Q$$\times$$M_{ns}(Q)$ and $T_{join}$$\subseteq$$M_{ns}(Q)$$\times$$Q$ are the set of fork and join transitions. Here $M_{ns}(Q)$ (non empty and non singleton) stands for multisets over $Q$ of cardinality at least 2.\\
The initial state of the automaton is $V_{0}$ and all other $V_{i}$'s are non-terminal states.In otherwords, states of Branching automaton are the variables of the linear grammar.\\ Each production $V_{i}\rightarrow a_{j}\|V_{j}$, for $i=0,1,2,...,(n-1)$ and $j=i+1$ corresponds to the transition in the branching automaton from $V_{i}$ to $V_{j}$ and the transition is defined by $T^{*}_{par}(V_{i}, B)=V_{j}$ , $B\subseteq \Sigma^{\oplus}$, $V_{i}\subseteq T_{fork}$ ,$V_{j}\subseteq T_{join}$ and $T^{*}_{par}(V_{i}, w)=V_{f}$ where $V_{f}$ is a final state. Transitions $T^{*}_{par}$ in branching automaton are given in Figure 1.%for each production $V_{i}\rightarrow a_{1}\|a_{2}\|...\|a_{m}\|V_{j}$ of linear grammar $G$.\\ Similarly, for each production $V_{i}\rightarrow a_{1}\|a_{2}\|...\|a_{m}$ and the corresponding transition of the automaton is $\delta^{*}_{par}(V_{i}, a_{1}\|a_{2}\|...\|a_{m})=V_{f}$ where $V_{f}$ is a final state.
\begin{figure}
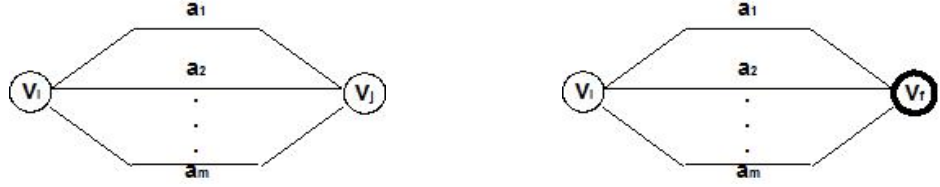

        \centering
        \begin{subfigure}[b]{0.475\textwidth}
            \centering
            {\includegraphics[width=5cm, height=2.5cm]{br1.jpg}}
            \caption{Represents $V_{i}\rightarrow a_{1}\|a_{2}\|...\|a_{m}\|V_{j}$}
        \end{subfigure}
        \hfill
        \begin{subfigure}[b]{0.475\textwidth}
            \centering
            {\includegraphics[width=5cm, height=2.5cm]{br2.jpg}}
            \caption{Represents $V_{i}\rightarrow a_{1}\|a_{2}\|...\|a_{m}$}
        \end{subfigure}
        \caption {\small Transition Representations}
    \end{figure}
Suppose $w\in L(G)$ then it satisfies the above production rules. By the construction of transitions on branching automaton, clearly $V_{f}\in T^{*}_{par}(V_{0},w)$ and hence, $w$ is accepted by $\mathbf{A}$.\\ Conversely, assume $w$ is accepted by branching automaton $\mathbf{A}$. We have to prove that $w$ is in parallel regular language $L(G)$.\\$\mathbf{A}$ recognizes $w$, the automaton passes through a sequence of states $V_{0}, V_{2},...,V_{f}$ using paths labeled by $a_{1}, a_{2},...,a_{m}$.\\ Then $w$ is of the form $w=a_{1}\|a_{2}\|...\|a_{k}\|a_{l}$ and its derivation $V_{0}\Rightarrow a_{1}\|V_{1} \Rightarrow a_{1}\|a_{2}\|V_{2} \Rightarrow ...\Rightarrow a_{1}\|a_{2}\|...\|a_{k}\|V_{n} \Rightarrow a_{1}\|a_{2}\|...\|a_{k}\|a_{l}=w$ exists. Therefore $w\in L(G)$.
\end{proof}
\subsection{Series Parallel Regular Grammar}
A grammar $G$ on series parallel languages is said to be regular if $G$ is right-linear or left-linear or linear.\\
Consider a grammar $G=\{V,T,P,S\}$ with production rules $A\rightarrow xB\|x$ or $A\rightarrow Bx\|x$ or $A\rightarrow x$, where $A,B\in V$ , $x\in SP(T)$.\\ $\emph{Example:}$ Let $G=(V,T,P,S)$ where $V=\{A,S\}$, $T=\{a,b\}$ and \\$P=\{S\rightarrow Aa, A\rightarrow a\|A, A\rightarrow b\}$ is a series parallel regular grammar and $L(G)=\{(a^{n^{\oplus}}\|b)a|n\geq 1\}$ is a series parallel regular language.\\
\begin{theorem}
A language $L\subseteq \Sigma^{\oplus}$ is parallel regular if and only if there exists a parallel regular grammar $G=\{V,T,P,S\}$ such that $L=L(G)$.
\end{theorem}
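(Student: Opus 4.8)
The plan is to establish the two implications separately, drawing on the results already proved for linear (parallel regular) grammars. The reverse implication ($\Leftarrow$) requires no new work: if $L=L(G)$ for a parallel regular grammar $G$, then $G$ is by definition linear, and Theorem 1 immediately yields that $L(G)$, and hence $L$, is a parallel regular language.

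For the forward implication ($\Rightarrow$) the cleanest route is through the branching-automaton characterization. Since $L\subseteq\Sigma^{\oplus}$ is parallel regular, Theorem 2 supplies a branching automaton $\mathbf{A}=(Q,T_{seq},T_{fork},T_{join},T_{par},S,E)$ recognizing $L$. First I would take the state set $Q$ as the variable set $V$, with the start state as $S$. Next I would convert each parallel transition $T^{*}_{par}(V_i,B)=V_j$ into a production $V_i\rightarrow x\|V_j$ with $x\in T^{\oplus}$ read off the fork/join block $B$, and each transition into a final state into a terminating production $V_i\rightarrow x$. This reverses exactly the state-to-variable and transition-to-production correspondence used in the proof of Theorem 2, so the resulting $G$ has only productions of the admissible forms $A\rightarrow x\|B$ and $B\rightarrow x$ and is therefore a parallel regular grammar. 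Finally I would verify that a word $w$ admits an accepting run of $\mathbf{A}$ if and only if it admits the corresponding derivation $V_0\Rightarrow a_1\|V_1\Rightarrow\cdots\Rightarrow a_1\|\cdots\|a_l=w$ in $G$, giving $L=L(G)$.

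As an elementary alternative that avoids the automaton, the forward direction can instead be proved by structural induction on the inductive definition of parallel regular languages in Section 6, handling the base cases $\phi$ and $\{a\}$ by explicit grammars and the cases $\bigcup L_n$, $L_1\cdots L_n$, $L_1\|\cdots\|L_n$ and $L^{\oplus}$ by combining grammars of the constituents after disjoint renaming of variables. The main obstacle, common to both routes, is that the permitted productions are tightly restricted to the three linear shapes $A\rightarrow x\|B$, $A\rightarrow B\|x$, $B\rightarrow x$ with $x\in T^{\oplus}$: the naive combining rules for union and concatenation (such as $S\rightarrow S_i$ or $A\rightarrow\alpha B$) fall outside this format and must be normalized away, for instance by absorbing an empty or degenerate terminal block into an adjacent production. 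Checking that every operation can be realized while staying inside the linear format is precisely the step that gives the equivalence its content.
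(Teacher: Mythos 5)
Your proposal is correct and follows essentially the same route as the paper: the paper's entire proof of this theorem is the single line ``It follows from the theorem 2,'' i.e.\ it derives the result from the grammar--automaton correspondence, exactly as your main argument does (reverse direction via Theorem 1, forward direction via the branching automaton of Theorem 2 converted back into a linear grammar). Your write-up is in fact more careful than the paper's, since you make explicit the automaton-to-grammar construction and the restriction to the admissible production shapes, which the paper leaves entirely implicit.
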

\begin{proof}
It follows from the theorem 2.
\end{proof}
\begin{theorem}
Let $G=\{V,T,P,S\}$ be a left-linear or right-linear and linear grammar. Then $L(G)$ is a series parallel regular language.
\end{theorem}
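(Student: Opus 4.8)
The plan is to follow the pattern established in the proof of Theorem 1, treating the three admissible production formats of a series parallel regular grammar separately and, in each case, reading off a description of $L(G)$ that matches the definition of a series parallel regular language from Section 6. First I would record that, by the definition in Section 7.2, every production of $G$ has one of the forms $A\rightarrow xB\|x$, $A\rightarrow Bx\|x$, or $A\rightarrow x$ with $A,B\in V$ and $x\in SP(T)$. Since $SP(T)$ is closed under both concatenation and the parallel operation (immediate from $SP(\Sigma)=\Sigma^{+}\cup\Sigma^{\oplus}\cup\Sigma_{p}^{+}\cup\Sigma_{s}^{\oplus}$ together with the operations of Section 3), any string obtained by a terminating derivation again lies in $SP(T)$; hence $L(G)\subseteq SP(T)$, which places $L(G)$ in the correct universe.

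Next I would analyse the shape of a derivation. Because each production carries at most one nonterminal on its right-hand side, every derivation $S\Rightarrow\cdots\Rightarrow w$ is a single chain $V_{0}\Rightarrow V_{1}\Rightarrow\cdots\Rightarrow V_{k}$ in which each step either composes a fixed terminal block $x_{i}\in SP(T)$ (sequentially, for the right- or left-linear productions, or in parallel, for the linear productions) and then continues from the new nonterminal, or finally applies a production $A\rightarrow x$ to terminate. In the right-linear case the resulting word is a concatenation of the blocks contributed at each step, in the left-linear case the same blocks accumulate on the opposite side, and in the linear case the blocks are composed in parallel; in every case the building blocks are finitely many singletons $\{x_{i}\}\subseteq SP(T)$, which are regular by the base clause $\{a\}$ of the definition.

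The last and most substantive step is to turn a self-referential production such as $A\rightarrow xA\|x$ into an iteration and to match it to one of the permitted closure operations. Here I would argue, by induction on the number of applications of the recursive production, that the set of words generated between two visits to the same nonterminal is exactly a power $L^{n}$ (for sequential recursion) or $L_{n}$ (for parallel recursion) of the corresponding block language $L$, so that the union over all $n\geq 0$ yields either $L^{*}$ or $L^{\oplus}$; both are listed as series parallel regular languages in Section 6. Combining these closures with the finite concatenations $L_{1}\cdot L_{2}\ldots L_{n}$, the finite parallel products $L_{1}\|L_{2}\|\ldots\|L_{n}$, and the finite union $\bigcup L_{n}$ that collects the alternative productions for each nonterminal, I would exhibit $L(G)$ as a finite expression in precisely the operations permitted by the definition, and conclude that $L(G)$ is a series parallel regular language.

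I expect the main obstacle to be the interplay between the sequential and parallel closures when a grammar simultaneously uses right-linear (or left-linear) and linear productions: one must verify that the associativity convention adopted for a block such as $xB\|x$ and the order in which the two kinds of recursion are unfolded do not produce words outside the class generated by $L^{*}$ and $L^{\oplus}$. Making this bookkeeping precise, essentially showing that mixed recursion still factors as a finite composition of the allowed operations rather than demanding a genuinely new closure, is the part that needs the most care; I would handle it by treating each nonterminal's contribution as a separate regular block and composing them in the order dictated by the derivation chain.
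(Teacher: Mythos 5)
Your proposal attempts a genuine general argument, which already makes it completely different from the paper: the paper's entire proof of this theorem consists of writing down one grammar ($S\rightarrow Aa$, $A\rightarrow a\|A$, $A\rightarrow b$), computing its language, and declaring the theorem proved by that single example. So there is no real ``paper approach'' to reproduce; the question is whether your structural argument closes the universal claim. It does not, and the failure is exactly at the step you flagged as the main obstacle and then deferred: the assertion that mixed recursion ``still factors as a finite composition of the allowed operations'' is false for the class of grammars the theorem quantifies over.

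The definition in Section 7.2, the hypothesis ``left-linear or right-linear and linear'', and the paper's own example (which mixes a left-linear rule with a parallel one) all permit a single grammar to contain productions of several types, and in particular permit a recursion cycle to pass through rules of different types. Consider $P=\{S\rightarrow aT,\ T\rightarrow Sb,\ S\rightarrow ab\}$, where every rule has a permitted form with $x\in SP(T)$. Derivations are single chains, as you argue, but between two visits to $S$ the sentential form evolves as $S\Rightarrow aT\Rightarrow aSb$: the blocks contributed by one round of recursion wrap around the previous word on \emph{both} sides, rather than accumulating on one side or in parallel. Hence the words generated between two visits to the same nonterminal do not form a power $L^{n}$ or $L_{n}$ of a fixed block language, and indeed $L(G)=\{a^{n}b^{n}\mid n\geq 1\}$, which no finite expression built from singletons with union, concatenation, parallel product, $L^{*}$ and $L^{\oplus}$ can define (on purely sequential words these operations yield exactly the classical regular languages, and $\{a^{n}b^{n}\}$ is the standard non-regular example). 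A second failure mode, specific to this setting, is a cycle mixing a parallel and a sequential rule, e.g. $A\rightarrow a\|B$, $B\rightarrow aA$, $A\rightarrow b$: the generated words have unbounded alternation of sequential inside parallel composition (and unbounded width), whereas any finite expression in the allowed operations produces words of bounded alternation depth; this is the phenomenon behind the bounded-width theorem of Lodaya and Weil [1]. So your final bookkeeping step cannot be repaired: the theorem as stated is false unless one adds a hypothesis that neither you nor the paper makes explicit, namely that every recursion cycle of $G$ uses productions of a single type (all right-linear, all left-linear, or all parallel). Under that restriction your block-and-closure decomposition does go through and would be an actual proof, unlike the paper's single example.
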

\begin{proof}
We prove this theorem by an example.\\ Consider $G=(V,T,P,S)$ where $V=\{A,S\}$, $T=\{a,b\}$ and $P=\{S\rightarrow Aa, A\rightarrow a\|A, A\rightarrow b\}$. This generates the language and $L(G)=\{(a^{n^{\oplus}}\|b)a|n\geq 1\}$ and $L(G)\subseteq SP(\Sigma)$.\\ Here $P$ consists of both left-linear and linear productions. This satisfies the definition of series parallel regular languages. Hence, the theorem proved.
\end{proof}
\begin{theorem}
A language $L\subseteq SP(\Sigma)$ is regular if and only if there exists a series parallel regular grammar $G=\{V,T,P,S\}$ such that $L=L(G)$.
\end{theorem}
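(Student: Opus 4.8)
The plan is to prove the two implications separately, mirroring the template already used in the parallel case: just as Theorem 3 was deduced from the branching-automaton characterization of Theorem 2, here I would make the ``grammar $\Rightarrow$ regular'' direction rest on Theorem 4, and the ``regular $\Rightarrow$ grammar'' direction rest on the recognizability of series parallel languages by branching automata recorded in [1] and [6].

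For the direction asserting that $L = L(G)$ implies $L$ is regular, I would appeal directly to Theorem 4. A series parallel regular grammar admits, by definition, only the three permitted production forms $A \rightarrow xB\|x$, $A \rightarrow Bx\|x$ and $A \rightarrow x$ with $x \in SP(T)$; Theorem 4 already records that such right-linear, left-linear and linear grammars generate languages lying in $SP(\Sigma)$ and built from the base languages $\phi$ and $\{a\}$ by the operations $\bigcup L_n$, concatenation, $\|$, $L^{*}$ and $L^{\oplus}$. By the inductive definition of regular series parallel languages in Section 6, every language so built is regular, whence $L(G)$ is regular.

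For the converse, that a regular $L \subseteq SP(\Sigma)$ is generated by some series parallel regular grammar, I would start from a branching automaton $\mathbf{A} = (Q, T_{seq}, T_{fork}, T_{join}, T_{par}, S, E)$ recognizing $L$, exactly as in the proof of Theorem 2. I would read a grammar off $\mathbf{A}$ by setting $V = Q$, taking the start symbol to be the initial state, and translating transitions into productions: each sequential transition in $T_{seq}$ becomes a right-linear (or left-linear) production $p \rightarrow a\,q$, each final state contributes a terminating production $p \rightarrow x$ with $x \in SP(T)$, and each parallel transition in $T_{par}$, coupling a fork to a join through a block $B \subseteq \Sigma^{\oplus}$, becomes a production of the linear, parallel-containing form permitted by the grammar. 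One then checks by induction on the length of an accepting run that the words derivable in this grammar are precisely those accepted by $\mathbf{A}$, giving $L = L(G)$.

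The hard part will be the parallel structure in the converse. A fork transition splits a single state into a multiset of states of cardinality at least two that run concurrently and must later be reunited by a matching join, whereas a series parallel regular production carries only a single nonterminal on its right-hand side. Reconciling the two forces the forked branches to be collapsed into the terminal block $x \in SP(T)$ of a single production $p \rightarrow x\|q$, which is legitimate only when those branches have already been derived to completion rather than left as open nonterminals. Making this collapse precise --- equivalently, proving that finite parallel products $L_1 \| \cdots \| L_n$ and the parallel closure $L^{\oplus}$ of grammar-definable languages stay grammar-definable within the restricted production forms --- is the step I expect to carry the genuine content of the theorem, and it is exactly the point the paper's terse style is liable to pass over.
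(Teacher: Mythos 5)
Your first direction (grammar $\Rightarrow$ regular, via Theorem 4) is exactly what the paper does --- indeed it is \emph{all} the paper does: its entire proof of this theorem reads ``It is immediate from the above theorem 4,'' and since Theorem 4 asserts only that a left-linear or right-linear and linear grammar generates a series parallel regular language, the paper's argument covers at most one implication (and even that rests on Theorem 4, which the paper ``proves'' only by exhibiting a single example grammar). Your converse direction --- extracting a series parallel regular grammar from a branching automaton recognizing $L$, in the style of the proof of Theorem 2 --- is therefore a genuine addition: the paper contains no construction of a grammar from a regular language at all.

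However, your converse is a plan rather than a proof, and the gap you yourself flag is real and cannot be closed by routine bookkeeping within the paper's definitions. A parallel transition in $T_{par}$ couples a fork into a multiset of at least two states with a matching join; between fork and join each branch is an independent run of the automaton, possibly containing further nested forks. The permitted productions $A \rightarrow xB\|x$, $A \rightarrow Bx\|x$, $A \rightarrow x$ carry at most one nonterminal, so the concurrent branches cannot be kept alive as open nonterminals; they must be pre-collapsed into terminal blocks $x \in SP(T)$. Making that legitimate requires a lemma that the set of labels of runs between two fixed states is itself grammar-definable and may be substituted as a terminal block --- in effect closure of grammar-definability under substitution, parallel product $L_1\|\cdots\|L_n$, and the iterations $L^{\oplus}$ and $L^{*}$ --- and neither you nor the paper proves it. Until that step is supplied, the equivalence remains unestablished in both write-ups; the merit of yours is that it locates precisely where the missing content lies instead of declaring the theorem immediate.
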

\begin{proof}
It is immediate from the above theorem 4.
\end{proof}
\section{References}

\end{document}